\title{Energy--Time Uncertainty Relation for\\ Absorbing Boundaries}
\author{
Roderich Tumulka\footnote{Fachbereich Mathematik, Eberhard-Karls-Universit\"at, 
	Auf der Morgenstelle 10, 72070 T\"ubingen, Germany. 
	E-mail: roderich.tumulka@uni-tuebingen.de}
}
\date{August 29, 2022}
\newcommand{\Hilbert}{\mathscr{H}}
\newcommand{\be}{\begin{equation}}
\newcommand{\ee}{\end{equation}}
\newcommand{\scp}[2]{\langle #1|#2\rangle}
\renewcommand{\Im}{\mathrm{Im}}
\newcommand{\EEE}{\mathbb{E}}
\newcommand{\RRR}{\mathbb{R}}
\newcommand{\SSS}{\mathbb{S}}
\newcommand{\vx}{\boldsymbol{x}}
\newcommand{\vn}{\boldsymbol{n}}
\newcommand{\vX}{{\boldsymbol{X}}}
\newcommand{\vj}{{\boldsymbol{j}}}
\newcommand{\vomega}{{\boldsymbol{\omega}}}
\newcommand{\R}{\Omega}
\newcommand{\bou}{\partial \R}
\newcommand{\prob}{\mathrm{Prob}}
\newcommand{\free}{\mathrm{free}}
\newcommand{\Dir}{\mathrm{Dir}}
\theoremstyle{theorem}
\newtheorem{thm}{Theorem}
\begin{document}
\maketitle
\begin{abstract}
We prove the uncertainty relation $\sigma_T \, \sigma_E \geq \hbar/2$ between the time $T$ of detection of a quantum particle on the surface $\partial \Omega$ of a region $\Omega\subset \mathbb{R}^3$ containing the particle's initial wave function, using the ``absorbing boundary rule'' for detection time, and the energy $E$ of the initial wave function. Here, $\sigma$ denotes the standard deviation of the probability distribution associated with a quantum observable and a wave function. Since $T$ is associated with a POVM rather than a self-adjoint operator, the relation is not an instance of the standard version of the uncertainty relation due to Robertson and Schr\"odinger. We also prove that if there is nonzero probability that the particle never reaches $\partial \Omega$ (in which case we write $T=\infty$), and if $\sigma_T$ denotes the standard deviation conditional on the event $T<\infty$, then $\sigma_T \, \sigma_E \geq (\hbar/2) \sqrt{\prob(T<\infty)}$.

\medskip

\noindent 
Key words: detection time, time of arrival, POVM, Heisenberg indeterminacy relation, absorbing boundary condition in quantum mechanics. 
\end{abstract}

\section{Introduction}

{\it Note added.} After the completion of this paper, an anonymous referee pointed out to me that the result I prove here is a special case of a result proved before in the appendix of \cite{KRSW12}. Although I had cited \cite{KRSW12}, I had missed that result, partly because it is only in the appendix but not stated in the main body of \cite{KRSW12}. If I had been aware of what is in the appendix of \cite{KRSW12}, I would not have written this paper. Still, this paper may be useful to some readers.

\bigskip

The Heisenberg uncertainty relation is perhaps best known in the form
\be\label{QP}
\sigma_{Q,\psi}\, \sigma_{P,\psi} \geq \frac{\hbar}{2}
\ee
due to Kennard \cite{Ken27}, where $\sigma_{Q,\psi}$ and $\sigma_{P,\psi}$ are the standard deviations of the probability distributions of position and momentum associated with a given wave function $\psi$ of norm 1. A well-known generalization to arbitrary self-adjoint operators $A,B$ in the place of $Q,P$, established by Robertson \cite{Rob} and Schr\"odinger \cite{Schr}, asserts that
\be\label{AB}
\sigma_{A,\psi} \,\sigma_{B,\psi} \geq \tfrac{1}{2} \Bigl| \scp{\psi}{[A,B]|\psi} \Bigr|\,.
\ee
It was expected already in the early days of quantum mechanics that a relation of the form
\be\label{TE}
\sigma_{T,\psi}\, \sigma_{E,\psi}\geq \frac{\hbar}{2}
\ee
should hold between time $T$ and energy $E$; however, it is far from obvious which ``time observable'' should be meant here, and various possibilities have been discussed in the literature, see, e.g., \cite{AB61,Busch08,KRSW12,DD15} and the references therein.

In Section~\ref{sec:pfTE}, we prove \eqref{TE} for a particular time observable, the time at which a non-relativistic quantum particle, whose initial wave function $\psi_0$ is concentrated in some region $\R\subset \RRR^3$ of physical space, gets registered by ideal detectors placed along the surface $\bou$ of $\R$. Mathematically, we take this observable to be defined by the \emph{absorbing boundary rule} \cite{Wer87,detect-rule,detect-several,detect-dirac,detect-imaginary,detect-thm}, which asserts that, in the presence of such detectors, $\psi$ evolves for $t\geq 0$ according to the Schr\"odinger equation
\be\label{Schr}
i\hbar\frac{\partial \psi}{\partial t} = -\frac{\hbar^2}{2m} \nabla^2 \psi
\ee
with the absorbing boundary condition
\be\label{abc}
\vn(\vx) \cdot \nabla \psi(\vx) = i\kappa\psi(\vx)
\ee
for every $\vx\in\bou$, where $\vn(\vx)$ is the outward unit normal vector to $\bou$ at $\vx$ and $\kappa>0$ is a constant (the wave number of sensitivity of the detectors). The rule asserts further that the probability that the time $T$ and location $\vX$ of detection lie in the interval $[t,t+dt]$ and the surface element $d^2\vx$ in $\bou$ around $\vx$, respectively, is given by
\begin{align}
\prob_{\psi_0}\bigl(T\in dt, \vX\in d^2\vx\bigr) 
&= \vn(\vx) \cdot \vj^{\psi_t}(\vx) \,dt\, d^2\vx \label{prob1}\\
&= \tfrac{\hbar\kappa}{m} |\psi_t(\vx)|^2 \, dt\,d^2\vx\,,\label{prob2}
\end{align}
where it is assumed that $\|\psi_0\|=1$, $\vj^{\psi}$ is the usual probability current associated with $\psi$,
\be\label{jdef}
\vj^{\psi} = \tfrac{\hbar}{m} \, \Im[\psi^* \nabla \psi]\,,
\ee
and \eqref{prob1} and \eqref{prob2} are equivalent by virtue of the boundary condition \eqref{abc}. 

The absorbing boundary rule represents an ideal ``hard'' detector, i.e., one that detects the particle as soon as it reaches $\bou$, as opposed to a ``soft'' detector that may take a while to notice a particle moving through the detector volume. Soft detectors can be conveniently modeled through imaginary potentials \cite{All69b,detect-imaginary}; an energy--time uncertainty relation concerning the time at which a soft detector clicks was established by Kiukas et al.~\cite{KRSW12}. That result, together with the facts that the relation does not depend on the parameters of the soft detector and that the absorbing boundary rule can be obtained as a limit of soft detectors \cite{detect-imaginary}, could also provide a strategy for proving \eqref{TE} for hard detectors; however, this strategy is not straightforward because the relevant limit involves changing the detector volume and thus the Hilbert space. Be that as it may, we will give a different, direct proof of \eqref{TE} for the absorbing boundary rule.

In the event that the particle is never detected, we write $T=\infty$. The probability
\be
p:= \prob_{\psi_0}(T<\infty)
\ee 
is 1 if $\R$ is a bounded set, but may otherwise be less than 1, for example when $\R$ is a half space. If $0<p<1$ then $T$ necessarily has infinite mean and variance, so \eqref{TE} is trivially true because $\sigma_{T,\psi}=\infty$. In such a scenario it may be of interest (as suggested in \cite{KRSW12}) to consider the conditional distribution of $T$, given that $T<\infty$, and define $\sigma_{T,\psi}$ as the standard deviation of \emph{that} distribution; for this case we will show in Section~\ref{sec:pfTEp} that
\be\label{TEp}
\sigma_{T,\psi} \, \sigma_{E,\psi} \geq \sqrt{p}\frac{\hbar}{2}
\ee
instead of \eqref{TE}. 

It is noteworthy that the constant in \eqref{TE} does not depend on the detector parameter $\kappa$, and that the constant in \eqref{TEp} depends on it only through $p$. The question thus arises whether the constants in \eqref{TE} and \eqref{TEp} are sharp; at present, I cannot answer this question. (I give some discussion at the end of Section~\ref{sec:pfTE}.)

It would also be of interest to study whether the relation 
\be
\EEE_\psi\bigl[ T |T<\infty\bigr] \: \sigma_{E,\psi} \geq C \, \sqrt{p} \, \hbar
\ee
with $\EEE_\psi[ T |T<\infty]$ the expectation of $T$ conditional on $T<\infty$, proved in \cite{KRSW12} for soft detectors, is also true for the absorbing boundary rule, and if so, what the optimal value of the numerical constant $C$ is. Likewise, relations between the means of $T$ and $E$, i.e., $\EEE_\psi[ T |T<\infty]$ and $\EEE_\psi E$, would be of interest. These questions will not be addressed here.

\section{Definition of Energy}

We need to say more about the exact statement involving \eqref{TE}, in particular what exactly the quantities in \eqref{TE} mean for us. As mentioned, the quantity $\sigma_{T,\psi}$ in \eqref{TE} is the standard deviation of the probability distribution of $T$, 
\be\label{sigmaT}
\sigma_{T,\psi}^2 = \mathrm{Var}(T)=\EEE_\psi \bigl[ (T-\EEE_\psi T)^2 \bigr]\,.
\ee
We assume that $\psi=\psi_0$ is smooth with compact support that does not touch $\bou$; for such $\psi$ we write $\psi\in C_0^\infty(\R\setminus\bou)$. For the proof to go through, however, it suffices to make the weaker assumption that $\psi$ lies in the domain $D(H^2)$ of $H^2$ and is such that both $\psi$ and $H\psi$ vanish on the boundary.

It follows \cite{detect-rule,detect-thm} from \eqref{prob2} that the probability distribution of $T$ is given by a POVM (positive-operator-valued measure \cite{Nai40,Lud85,povm}) $F(\cdot)$ in the sense that
\be
\prob_{\psi_0}(T\in\Delta) = \scp{\psi_0}{F(\Delta)|\psi_0}
\ee
for every (measurable) set $\Delta \subseteq [0,\infty]$. Since the distribution of $T$ is given by a POVM, not a self-adjoint operator, the Robertson--Schr\"odinger inequality \eqref{AB} does not directly apply to yield the desired relation \eqref{TE}. The POVM can be expressed as
\begin{align}
  F \bigl( dt \bigr) 
  &= \tfrac{\hbar\kappa}{m} \, W_t^* \biggl(\int_{\bou} d^2\vx \,
  |\vx\rangle\langle\vx| \biggr) W_t\,  dt \\
  &= -\tfrac{i}{\hbar} W_t^*(H^*-H)W_t \, dt\\[2mm]
F (\{\infty\}) 
&=\lim_{t\to\infty} W_t^* W_t 
\end{align}
with ${}^*$ denoting the adjoint operator and $W_t$ the (non-unitary) linear operator that maps $\psi_0$ to $\psi_t$ solving \eqref{Schr} and \eqref{abc}. The operators $W_t$ are of the form $W_t=e^{-iHt/\hbar}$ for $t\geq 0$ but the Hamiltonian $H$ is \emph{not} self-adjoint. The $W_t$ have the properties $W_0=I$, $W_t W_s=W_{t+s}$, and $\|W_t\psi\|\leq \|\psi\|$; that is, they form a \emph{contraction semigroup}. 

This brings us to the question of what exactly is meant by $\sigma_{E,\psi}$. The question arises because the Hamiltonian $H$ is not self-adjoint and, in fact, not unitarily diagonalizable (it is not a ``normal'' operator, its self-adjoint and its skew-adjoint part do not commute and therefore cannot be simultaneously diagonalized \cite{detect-thm}). As a consequence, there is no PVM (projection-valued measure) or POVM associated with $H$, so it is far from clear whether and how a probability distribution (over real or complex ``energies'') can be associated with $H$ and $\psi=\psi_0$. We will use instead the free Hamiltonian
\be
H_{\free}=-\frac{\hbar^2}{2m}\nabla^2
\ee
on (the appropriate domain in) $L^2(\RRR^3)$ (i.e., the second Sobolev space). This is anyhow what one would naturally regard as the ``energy distribution'' of $\psi\in C_0^\infty(\R\setminus\bou)$. The free Hamiltonian is self-adjoint and thus associated with a PVM that defines, for every $\psi\in L^2(\RRR^3)$, a probability distribution on the energy axis. In fact, this distribution is concentrated on the positive half axis and there has density 
\be
\rho(E)=
\frac{\sqrt{2m^3E}}{\hbar^3}\int_{\SSS^2}d^2\vomega \,\biggl|\widehat{\psi}\biggl(\frac{\sqrt{2mE}}{\hbar}\vomega\biggr)\biggr|^2 \,, 
\ee
where $\SSS^2=\{\vomega\in\RRR^3:|\vomega|=1\}$ is the unit sphere, $d^2\vomega $ the surface element ($\sin \theta \, d\theta \, d\varphi$ in spherical coordinates), and $\widehat{\psi}$ the Fourier transform of $\psi$. The standard deviation of this distribution is $\sigma_{E,\psi}$. Equivalently,
\be\label{sigmaEHfree}
\sigma_{E,\psi}^2 = \scp{\psi}{H_{\free}^2|\psi}-\scp{\psi}{H_{\free}|\psi}^2\,.
\ee

The following observation about $\psi\in C_0^\infty(\R\setminus\bou)$ is relevant here: Consider the three different viewpoints of taking the Hamiltonian to be $H$ (and regarding $\psi$ as an element of $L^2(\R)$), or of taking the Hamiltonian to be $H_{\free}$ (and regarding $\psi$ as an element of $L^2(\RRR^3)$), or of taking the Hamiltonian to be $H_{\Dir}=-(\hbar^2/2m)\nabla^2$ with Dirichlet boundary conditions on $\bou$ (and regarding $\psi$ as an element of $L^2(\R)$ again). The latter two viewpoints lead to completely different probability distributions over the energy axis (one is continuous, the other discrete for bounded $\R$), but both have the same mean and variance, as the mean is given by
\be
\scp{\psi}{H_{\free}|\psi}=\scp{\psi}{H_{\Dir}|\psi}
\ee
(because for a function not touching the boundary, both $H_{\free}$ and $H_{\Dir}$ are given by $-(\hbar^2/2m)\nabla^2$), and the variance by
\be
\scp{\psi}{H_{\free}^2|\psi}-\scp{\psi}{H_{\free}|\psi}^2=\scp{\psi}{H_{\Dir}^2|\psi}-\scp{\psi}{H_{\Dir}|\psi}^2
\ee
(because both $H_{\free}^2$ and $H_{\Dir}^2$ are $(\hbar^4/4m^2)\nabla^4$). Furthermore, 
the standard formulas for the mean and variance, $\scp{\psi}{H|\psi}$ and $\scp{\psi}{H^2|\psi}-\scp{\psi}{H|\psi}^2$, applied to the non-self-adjoint $H$, still yield the same values, for $\psi\in C_0^\infty (\R\setminus\bou)$, as $H_{\free}$ and $H_{\Dir}$. That is why $\sigma_{E,\psi}$ can still be expressed in terms of $H$ as
\be\label{sigmaH}
\sigma_{E,\psi}^2 = \scp{\psi}{H^2|\psi}-\scp{\psi}{H|\psi}^2\,.
\ee

As another remark, I mention that our proofs of \eqref{TE} and \eqref{TEp} remain valid if a bounded, smooth potential $V:\RRR^3\to\RRR$ with bounded derivatives is added to the Schr\"odinger equation \eqref{Schr}, and $\sigma_{E,\psi}$ is understood as the standard deviation for $-(\hbar^2/2m)\nabla^2+V$, in agreement with \eqref{sigmaH}. While for the uncertainty relation \eqref{QP} between position and momentum, the choice of potential is irrelevant, the relation \eqref{TE} between time and energy is affected by the choice of $V$ in two ways: first, because $V$ is part of the meaning of the energy $E$, and second, because the choice of $V$ affects the time evolution and thus the probability distribution of $T$.

\section{Proof of \eqref{TE}}
\label{sec:pfTE}

We first focus on the case that $p=\prob(T<\infty)=1$ for all $\psi$. We formulate our result for $V=0$. For definiteness, we take $\R$ to be open, $\R=\R\setminus \bou$.

\begin{thm}
Suppose that $\R\subset \RRR^3$ is open with a boundary $\bou$ that is locally Lipschitz and piecewise $C^1$, and such that
\be\label{WtWt0}
\lim_{t\to\infty} W_t^*W_t = 0
\ee
(or, equivalently, $p=1$ for every $\psi\in L^2(\R)$).
For every $\psi\in D(H^2)$ with $\|\psi\|=1$ and $\psi$ and $H\psi$ vanishing on $\bou$, in particular for $\psi\in C_0^\infty(\R)$ with $\|\psi\|=1$, \eqref{TE} is true with \eqref{sigmaT} and \eqref{sigmaEHfree}.
\end{thm}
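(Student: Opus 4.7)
The plan is to realize the POVM $F$ via an isometric Naimark--style dilation onto an auxiliary Hilbert space and then apply the ordinary Robertson--Schr\"odinger inequality \eqref{AB} on that enlarged space. The key observation is that the POVM density $F(dt)/dt=-\tfrac{i}{\hbar}W_t^*(H^*-H)W_t$ can be written as $W_t^*K^*KW_t$, where $K:=\sqrt{\hbar\kappa/m}\,\gamma_0$ is the boundary-trace map from (the appropriate Sobolev subspace of) $L^2(\R)$ into $\mathcal{B}:=L^2(\bou)$. I then introduce
\[
V:L^2(\R)\longrightarrow L^2(\RRR;\mathcal{B}),\qquad (V\psi)(t):=\chi_{\{t\geq 0\}}\,K\,W_t\psi.
\]
The hypothesis \eqref{WtWt0} is exactly the statement that $V^*V=\int_0^\infty W_t^*K^*KW_t\,dt=F([0,\infty])=I$, so $V$ is a genuine isometry and $\|V\psi\|=\|\psi\|=1$. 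Letting $\tilde T$ denote multiplication by $t$ on the dilated Hilbert space $\mathcal{K}:=L^2(\RRR;\mathcal{B})$, the construction gives $\scp{V\psi}{g(\tilde T)|V\psi}=\int_0^\infty g(t)\,\prob_\psi(T\in dt)=\EEE_\psi\,g(T)$ for every polynomial $g$, and in particular $\sigma_{\tilde T,V\psi}=\sigma_{T,\psi}$.

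Next I pair $\tilde T$ with the self-adjoint operator $\tilde H:=i\hbar\,\partial_t$ on $\mathcal{K}$; these form a canonically conjugate pair with $[\tilde T,\tilde H]=-i\hbar I$ on the usual dense invariant domain. Using the intertwining $HW_t=W_tH$ valid on $D(H)$, for $t>0$ I compute
\[
\tilde H (V\psi)(t)=i\hbar\,\partial_t\bigl(KW_t\psi\bigr)=i\hbar\cdot\bigl(-\tfrac{i}{\hbar}\bigr)\,KW_tH\psi=(VH\psi)(t).
\]
The jump of $V\psi$ across $t=0$ equals $K\psi$, which vanishes because $\psi$ vanishes on $\bou$; so no distributional $\delta$-term survives and the intertwining identity $\tilde H V\psi=V(H\psi)$ holds as a genuine equality in $\mathcal{K}$. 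Combined with $V^*V=I$ this gives
\[
\scp{V\psi}{\tilde H|V\psi}=\scp{\psi}{V^*VH|\psi}=\scp{\psi}{H|\psi},\qquad
\|\tilde H V\psi\|^2=\|VH\psi\|^2=\scp{H\psi}{V^*V|H\psi}=\scp{\psi}{H^2|\psi},
\]
where the last equality uses $\|H\psi\|^2=\scp{\psi}{H^2|\psi}$ for $\psi\in D(H^2)$ vanishing on $\bou$. Hence $\sigma_{\tilde H,V\psi}=\sigma_{E,\psi}$ by \eqref{sigmaH}.

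The theorem now falls out of Robertson--Schr\"odinger \eqref{AB} applied to the self-adjoint pair $\tilde T,\tilde H$ on the unit vector $V\psi$:
\[
\sigma_{T,\psi}\,\sigma_{E,\psi}=\sigma_{\tilde T,V\psi}\,\sigma_{\tilde H,V\psi}\geq \tfrac{1}{2}\bigl|\scp{V\psi}{[\tilde T,\tilde H]|V\psi}\bigr|=\tfrac{\hbar}{2}\|V\psi\|^2=\tfrac{\hbar}{2}.
\]
The real obstacle is not the algebra above, which is essentially a one-line manipulation once the dilation is in place, but the functional-analytic bookkeeping: verifying that $V\psi$ actually lies in the cross-domain $D(\tilde T\tilde H)\cap D(\tilde H\tilde T)$ demanded by \eqref{AB}, and that $\tilde H V\psi=VH\psi$ is a literal $L^2$ identity rather than holding only up to a surface term at $t=0$. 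This is precisely what forces the twin hypotheses $\psi,H\psi\in D(H)$ with both vanishing on $\bou$: the former ensures that $t\mapsto KW_t\psi$ and $t\mapsto tKW_tH\psi$ are square-integrable $\mathcal{B}$-valued functions, and the latter eliminates the would-be $\delta$-contributions to $\partial_t V\psi$ and $\partial_t VH\psi$ at the origin. Once these hypotheses are granted, the argument above is rigorous.
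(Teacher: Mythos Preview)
Your proposal is correct and follows essentially the same route as the paper: the map $V$ is exactly the paper's dilation $J$ (extended by zero to negative $t$), the intertwining $\tilde H V\psi=VH\psi$ is established the same way from the vanishing of $\psi$ and $H\psi$ on $\bou$, and the conclusion is drawn from the position--momentum uncertainty relation on $L^2(\RRR;L^2(\bou))$. The one noteworthy difference is that the paper invokes Kennard's inequality \eqref{QP} for the conjugate pair $(\tilde T,\tilde H)$ rather than Robertson--Schr\"odinger \eqref{AB}; since \eqref{QP} holds for \emph{every} unit vector (with $\sigma=\infty$ outside the relevant domain), this sidesteps precisely the cross-domain verification $V\psi\in D(\tilde T\tilde H)\cap D(\tilde H\tilde T)$ that you identify as ``the real obstacle'' --- an obstacle your hypotheses do not in fact guarantee away (nothing forces $T$ to have finite second moment), though the failure is harmless because then $\sigma_{T,\psi}=\infty$.
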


\begin{proof}
Theorem 1 of \cite{detect-thm} guarantees that the operators $W_t$ exist and the distribution of $T$ is well defined. 
Due to \eqref{WtWt0},
\begin{subequations}\label{Jdef}
\be
J: L^2(\R,d^3\vx) \to L^2([0,\infty)\times \bou, dt\, d^2\vx)\,,
\ee 
defined by
\be
J\psi(t,\vx) = \sqrt{\hbar\kappa/m} \,(W_t\, \psi)(\vx)
\ee 
\end{subequations}
on $\psi$ from the domain of $H$ and by continuation on other $\psi$s from $L^2(\R)$ \cite[Sec.~5]{detect-thm}, is a unitary isomorphism between $\Hilbert=L^2(\R)$ and a subspace of $\Hilbert_{>}=L^2([0,\infty)\times\bou)$; the natural PVM on $\Hilbert_{>}$ is the Naimark dilation \cite{Nai40} of the joint POVM for detection time $T$ and detection location $\vX$; $T$ corresponds to a self-adjoint operator $\hat T = t$ on $\Hilbert_{>}$, where $t$ means multiplication by the variable $t$, and the probability distribution of $T$ for $\psi\in\Hilbert$ with $\|\psi\|=1$ is exactly the distribution associated with $\hat T$ and $J\psi$. The time evolution $W_t$ gets mapped by $J$ to the shift $S_t$ on $\Hilbert_{>}$, $S_t \phi(s,\vx)=\phi(s+t,\vx)$ (by virtue of the semigroup property):
\be\label{JW}
JW_t=S_tJ\,.
\ee
Correspondingly, $H$ gets mapped by $J$ to the generator $H_{>}$ of the semigroup $(S_t)_{t\geq 0}$, i.e., $H_{>}= i\hbar\partial/\partial t$, which is no longer self-adjoint because of the boundary at $t=0$. 

We will also regard $\Hilbert_{>}$ as a subspace of $\widetilde\Hilbert= L^2(\RRR\times \bou)$ (by setting $\phi(t,\vx)=0$ for $t<0$) with $\widetilde S_t$ the shift defined for positive or negative $t$. Note that $\widetilde S_t$ does not agree with $S_t$ even for positive $t$; $\Hilbert_{>}$ is not invariant under $\widetilde S_t$ for $t>0$, and $S_t f \neq \widetilde S_t f$ for $f\in\Hilbert_{>}\subset \widetilde \Hilbert$. Rather, for such $f$, $S_t f=P_{>}\widetilde{S}_t f$ with $P_{>}$ the projection $\widetilde{\Hilbert}\to \Hilbert_{>}$. The group $S_t$ is generated by the operator $\widetilde H = i\hbar\partial/\partial t$, whose formula looks the same as that of $H_{>}$, but the domain of $\widetilde H$ is different from that of $H_{>}$, and $\widetilde H$ is self-adjoint whereas $H_{>}$ is not. The domain of $\widetilde H$ is $H^1(\RRR,L^2(\bou))$, that of $H_{>}$ is $P_{>}H^1(\RRR,L^2(\bou))$.

The multiplication by $t$ also defines a self-adjoint operator $\widetilde T$ on $\widetilde\Hilbert$, and for $f\in \Hilbert_{>}\subset \widetilde\Hilbert$, $\hat T f = \widetilde T f$. Now on $\widetilde\Hilbert$, there is an uncertainty relation between $\widetilde T$ and $\widetilde H$: For any $\phi\in\widetilde \Hilbert$ with $\|\phi\|=1$,
\be\label{uncertaintywidetilde}
\sigma_{\widetilde T,\phi} \, \sigma_{\widetilde H,\phi} \geq \frac{\hbar}{2}\,.
\ee
This is simply the uncertainty relation \eqref{QP} for position and momentum because $t$ is now one of the variables in $\psi$, analogous to position, and $\widetilde H$ is the derivative relative to $t$, analogous to momentum.

Here, we need to comment on the mathematical conditions of the validity of \eqref{QP}. The question is whether, for \eqref{QP} to be valid, $\psi$ needs to lie in the domain of $Q$ and that of $P$, or perhaps even in the domain of $QP$ and that of $PQ$. The answer is that \eqref{QP}, when understood appropriately, is valid for every $\psi\in L^2(\RRR)$ with $\|\psi\|=1$ (whereas this is not true in general for \eqref{AB} \cite[Chap.~12]{Hall}). Let us explain. Let $A$ be a self-adjoint operator in the Hilbert space $\Hilbert$ and $\psi\in\Hilbert$ with $\|\psi\|=1$. Then $\psi$ defines a probability distribution on the spectrum of $A$, the Born distribution $\scp{\psi}{P(\cdot)|\psi}$ with $P$ the spectral PVM of $A$, $A=\int_{\RRR}P(d\alpha) \, \alpha$. The Born distribution has finite expectation and variance if and only if $\psi$ lies in the domain of $A$. In that case, the variance is given by
\be\label{sigmaA}
\sigma_{A,\psi}^2=\|A\psi\|^2 - \scp{\psi}{A\psi}^2\,.
\ee
Thus, for every $\psi\in L^2(\RRR)$ with $\|\psi\|=1$ that lies in both the domain of $Q$ and that of $P$, $\sigma_{Q,\psi}$ and $\sigma_{P,\psi}$ are well defined and finite. For those $\psi$, \eqref{QP} can be proved \cite[Thm.~12.7]{Hall}. However, since for $\psi$ not in the domain of $Q$, $\sigma_{Q,\psi}=\infty$, and since there is no unit vector in $L^2(\RRR)$ for which either $\sigma_Q$ or $\sigma_P$ vanishes, the left-hand side of \eqref{QP} is infinite (and the relation trivially true) for every $\psi$ not in both the domain of $Q$ and that of $P$. Thus, \eqref{QP} is plainly and cleanly valid for all wave functions.

The remainder of the reasoning is about the implications of the relation \eqref{uncertaintywidetilde} in $\widetilde\Hilbert$ for $\psi$ that vanishes along with $H\psi$ on $\bou$, such as $\psi\in C_0^\infty(\R)$. Since $\psi$ lies in the domain of $H$, $J\psi$ lies in the domain of $H_{>}$, and $H_{>}J\psi=JH\psi$. Moreover, $J\psi(t=0,\vx)=0$ (because $\psi$ vanishes on $\bou$), and $(\partial/\partial t)J\psi(t=0,\vx)=0$ in $L^2(\bou)$ (because $H\psi$ still lies in the domain of $H$ and thus has a trace on $\bou$, which in fact is zero). Therefore, $J\psi$ also lies in the domain of $\widetilde{H}$, and
\be
\widetilde{H}J\psi = H_{>}J\psi = JH\psi\,.
\ee
Thus, by \eqref{sigmaA},
\begin{subequations}
\begin{align}
\sigma_{\widetilde{H},J\psi}^2 
&= \|\widetilde{H}J\psi\|^2-\scp{J\psi}{\widetilde{H}J\psi}^2\\
&= \|JH\psi\|^2-\scp{J\psi}{JH\psi}^2\\[1.5mm]
&= \|H\psi\|^2-\scp{\psi}{H\psi}^2\\[1mm]
&=\sigma_{E,\psi}^2\,,
\end{align}
\end{subequations}
using that $J$ is a unitary isomorphism and \eqref{sigmaH}. 
On the other hand,
\begin{align}
\sigma_{\widetilde{T},J\psi}^2 
&=\scp{J\psi}{\widetilde{T}^2|J\psi}-\scp{J\psi}{\widetilde{T}|J\psi}^2\\
&=\scp{J\psi}{\hat{T}^2|J\psi}-\scp{J\psi}{\hat{T}|J\psi}^2\\[2mm]
&=\sigma_{T,\psi}^2\,,
\end{align}
which completes the proof of \eqref{TE}.
\end{proof}

Let us turn again to the question of the sharp constant in \eqref{TE}. It is well known that the constant in \eqref{QP} is sharp and equality holds for suitable Gaussian packets. As a consequence, $\sigma_{\widetilde T} \, \sigma_{\widetilde H}=\hbar/2$ for some states in $\widetilde\Hilbert$, and can presumably come arbitrarily close to $\hbar/2$ for suitable states in $\Hilbert_{>}$. The difficulty with making corresponding statements about $\sigma_T \, \sigma_E$ for states in $L^2(\R)$ is to characterize the functions in the range of $J$ and to control how close they can come to Gaussian packets in $\widetilde{\Hilbert}$.

\section{Proof of \eqref{TEp}}
\label{sec:pfTEp}

In \eqref{TEp}, we mean by $\sigma_{T,\psi}$ the standard deviation of $T$, conditional on $T<\infty$:
\be\label{sigmaTp}
\sigma_{T,\psi}^2 = \mathrm{Var}(T|T<\infty)\,.
\ee

\begin{thm}
Suppose that $\R\subset \RRR^3$ is open with a boundary $\bou$ that is locally Lipschitz and piecewise $C^1$. For every $\psi\in D(H^2)$ with $\|\psi\|=1$ and $\psi$ and $H\psi$ vanishing on $\bou$, in particular for $\psi\in C_0^\infty(\R)$ with $\|\psi\|=1$, \eqref{TEp} is true with \eqref{sigmaTp} and \eqref{sigmaEHfree}.
\end{thm}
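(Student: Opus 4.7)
The plan is to recycle the architecture of the proof of Theorem~1 but to work with a \emph{normalized conditional} image in $\widetilde{\Hilbert}$. Because $\|J\psi\|^2=\scp{\psi}{(I-F(\{\infty\}))\psi}=p$, the operator $J$ is now only a contraction (not an isometry) when $p<1$, so the natural unit vector in $\widetilde{\Hilbert}$ to feed into the Kennard inequality is $\phi:=J\psi/\sqrt{p}$, whose squared modulus reproduces the conditional joint density of $(T,\vX)$ given $T<\infty$. I would first apply $\sigma_{\widetilde T,\phi}\sigma_{\widetilde H,\phi}\geq \hbar/2$ on $\widetilde{\Hilbert}$: the factor $\sigma_{\widetilde T,\phi}$ is then automatically $\sigma_{T,\psi}$ in the sense of \eqref{sigmaTp}, and the boundary-condition argument from Section~\ref{sec:pfTE} goes through verbatim to yield $\widetilde H\phi=JH\psi/\sqrt{p}$.

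The new content is the inequality $\sqrt{p}\,\sigma_{\widetilde H,\phi}\leq \sigma_{E,\psi}$, equivalently $p\,\sigma_{\widetilde H,\phi}^2\leq \sigma_{E,\psi}^2$. Setting $P:=F(\{\infty\})=I-J^*J$, $\alpha:=\scp{\psi}{H\psi}$, $\beta:=\scp{J\psi}{JH\psi}=\scp{\psi}{(I-P)H\psi}$, $q:=\scp{\psi}{P\psi}=1-p$, $\gamma:=\scp{\psi}{PH\psi}=\alpha-\beta$, and $r:=\scp{H\psi}{PH\psi}$, a short direct computation using \eqref{sigmaEHfree} will give
\[
\sigma_{E,\psi}^2 - p\,\sigma_{\widetilde H,\phi}^2 \;=\; r + \beta^2/p - \alpha^2 \;=\; r - \gamma^2 - 2\beta\gamma + \beta^2 q/p,
\]
so the task reduces to showing $r \geq \gamma^2+2\beta\gamma-\beta^2 q/p$. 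Positivity of $P$ together with Cauchy--Schwarz yields $\gamma^2\leq qr$, i.e.\ $r\geq \gamma^2/q$ when $q>0$, and writing $1/q = 1+p/q$ this reads $r\geq \gamma^2 + \gamma^2 p/q$. The remaining inequality $\gamma^2 p/q + \beta^2 q/p\geq 2\beta\gamma$ is then immediate from AM--GM. The degenerate case $q=0$ forces $\gamma=0$ and collapses to Theorem~1.

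Combining the two bounds will give
\[
\sigma_{T,\psi}\,\sigma_{E,\psi}\;\geq\;\sqrt{p}\,\sigma_{\widetilde T,\phi}\,\sigma_{\widetilde H,\phi}\;\geq\;\sqrt{p}\,\hbar/2.
\]
I expect the main obstacle to be not any single deep step, but the bookkeeping needed to confirm that $\alpha$, $\beta$, $\gamma$ are all real: reality of $\alpha$ uses $\psi|_{\bou}=0$ as in Section~2, reality of $\beta$ uses self-adjointness of $\widetilde H$ on $\widetilde{\Hilbert}$, and then $\gamma=\alpha-\beta$ is real by subtraction. Without these, one would have to carry absolute values and real parts through the Cauchy--Schwarz/AM--GM chain, which is workable but less clean.
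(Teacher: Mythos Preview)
Your proposal is correct and shares the overall architecture of the paper's proof: work with the unit vector $\phi=J\psi/\sqrt{p}$ in $\widetilde\Hilbert$, apply the Kennard inequality there, identify $\sigma_{\widetilde T,\phi}$ with the conditional $\sigma_{T,\psi}$, and then show the remaining inequality $p\,\sigma_{\widetilde H,\phi}^2\leq \sigma_{E,\psi}^2$. The difference is in how that last inequality is obtained. You compute the difference $\sigma_{E,\psi}^2 - p\,\sigma_{\widetilde H,\phi}^2$ explicitly in terms of $P=I-J^*J$ and then bound it from below using Cauchy--Schwarz for the positive operator $P$ together with AM--GM; this is valid (your reality checks for $\alpha,\beta,\gamma$ are fine) but requires the bookkeeping you anticipate. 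The paper instead uses a one-line shift trick: since variances are unchanged under $\widetilde H\mapsto \widetilde H+E_0 I$, one may drop the mean-squared term to get $\sigma_{\widetilde H,\phi}^2\leq \tfrac{1}{p}\|(\widetilde H+E_0)J\psi\|^2=\tfrac{1}{p}\|J(H+E_0)\psi\|^2\leq \tfrac{1}{p}\|(H_\free+E_0)\psi\|^2$ for every $E_0\in\RRR$, and then take the infimum over $E_0$ using the variational characterization $\sigma_{E,\psi}^2=\inf_{E_0}\|(H_\free+E_0)\psi\|^2$. The paper's route is shorter and avoids the algebra entirely; your route is more hands-on but yields the same bound and, as a bonus, makes visible exactly where the loss from $p<1$ enters (through $r=\langle H\psi|PH\psi\rangle$ and $\gamma=\langle\psi|PH\psi\rangle$).
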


\begin{proof}
The mapping $J$ defined by \eqref{Jdef} is now not unitary but still a contraction, 
\be\label{Jcontract}
\|J\phi\| \leq \|\phi\| \quad \forall \phi\in L^2(\R)\,.
\ee
Since $\|J\psi\|=\sqrt{p}$, $J\psi/\sqrt{p}$ is a unit vector.
As before,
\be
S_t J=J W_t\,,~~\text{so}~~
H_{>} J = J H\,,
\ee
and the assumption on $\psi$ implies again that $J\psi$ lies in the domain of $\widetilde H$ and
\be
\widetilde H J\psi=JH\psi\,.
\ee
By the Kennard relation for $t$ and $i\partial/\partial t$,
\begin{subequations}\label{Hp}
\begin{align}
\frac{\hbar^2}{4\sigma^2_{T,J\psi/\sqrt{p}}} 
&\leq \sigma^2_{\widetilde{H},J\psi/\sqrt{p}}\\
&= \frac{1}{p} \|\widetilde{H}J\psi\|^2 - \frac{1}{p^2} \scp{J\psi}{\widetilde{H}J\psi}^2\\
&\leq \frac{1}{p} \bigl\|\widetilde{H}J\psi\bigr\|^2\\
&= \frac{1}{p} \bigl\| J H \psi\bigr\|^2\\
&\stackrel{\eqref{Jcontract}}{\leq} \frac{1}{p} \bigl\| H \psi\bigr\|^2\\
&= \frac{1}{p} \bigl\| H_\free \psi\bigr\|^2\,.
\end{align}
\end{subequations}
Since for any self-adjoint operator $A$, the variance $\sigma^2_A$ does not change when adding a constant $c\in\RRR$ to $A$, $\sigma^2_{A+cI}=\sigma^2_A$, we can replace $\widetilde{H}$ in \eqref{Hp} by $\widetilde{H}+E_0I$; thus, for any $E_0\in \RRR$,
\be\label{Hfreep}
\frac{\hbar^2}{4\sigma^2_{T,J\psi/\sqrt{p}}} 
\leq \frac{1}{p} \bigl\| (H_\free+E_0I) \psi\bigr\|^2\,.
\ee
Since for any real random variable $X$, its variance can be characterized as
\be
\text{Var} \, X = \inf_{c\in\RRR} \EEE\bigl[ (X-c)^2 \bigr]\,,
\ee
we have that for any observable $A$,
\be
\sigma_{A,\psi}^2 = \inf_{c\in\RRR} \scp{\psi}{(A-cI)^2|\psi}\,,
\ee
in particular
\be
\sigma_{H_\free,\psi}^2= \inf_{E_0\in\RRR} \bigl\| (H_\free+E_0I) \psi\bigr\|^2\,.
\ee
Thus, with \eqref{Hfreep},
\be
\frac{\hbar^2}{4\sigma^2_{T,J\psi/\sqrt{p}}} 
\leq \frac{1}{p} \sigma_{H_\free,\psi}^2\,,
\ee
which is just a different notation for the desired relation \eqref{TEp}, so the proof is complete.
\end{proof}

\bigskip

\noindent{\it Acknowledgment.} I thank Stefan Teufel for helpful discussion.

\bigskip

\noindent{\it Conflict of interest statement.} On behalf of all authors, the corresponding author states that there is no conflict of interest. 

\bigskip

\noindent{\it Data availability statement.} Not applicable.


\begin{thebibliography}{18.}

\bibitem{Ken27} E.H. Kennard:
	Zur Quantenmechanik einfacher Bewegungstypen.
	{\it Zeitschrift f\"ur Physik} {\bf 44(4-5)}: 326 (1927) 

\bibitem{Rob} H.P. Robertson:
	The Uncertainty Principle.
	{\it Physical Review} {\bf 34}: 163--164 (1929)

\bibitem{Schr} E. Schr\"odinger:
	Zum Heisenbergschen Unsch\"arfeprinzip.
	{\it Sitzungsberichte der Preussischen Akademie der Wissenschaften, 
	physikalisch-mathematische Klasse} {\bf 14}: 296--303 (1930)

\bibitem{AB61} Y. Aharonov and D. Bohm:
	Time in the Quantum Theory and the Uncertainty Relation for Time and Energy.
	{\it Physical Review} {\bf 122}: 1649 (1961)

\bibitem{Busch08} P. Busch:
	The Time--Energy Uncertainty Relation.
	Pages 73--106 in J.G. Muga, R. Sala Mayato, and \'I.L. Egusquiza (editors),
	{\it Time in Quantum Mechanics, Vol.~1, Second Edition}, 
	Lecture Notes in Physics {\bf 734},
	Berlin: Springer-Verlag (2008)
	\url{http://arxiv.org/abs/quant-ph/0105049v3}

\bibitem{KRSW12} J. Kiukas, A. Ruschhaupt, P.O. Schmidt, and R.F. Werner:
	Exact Energy--Time Uncertainty Relation for Arrival Time by Absorption.
	{\it Journal of Physics A: Mathematical and Theoretical} {\bf 45}: 185301 (2012)
	\url{http://arxiv.org/abs/1109.5087}

\bibitem{DD15} V.V. Dodonov and A.V. Dodonov:
	Energy--time and frequency--time uncertainty relations: exact inequalities.
	{\it Physica Scripta} {\bf 90}: 074049 (2015)
	\url{http://arxiv.org/abs/1504.00862}

\bibitem{Wer87} R. Werner:
	Arrival time observables in quantum mechanics.
	\textit{Annales de l'Institute Henri Poincar\'e, section A} \textbf{47} 429--449 (1987)

\bibitem{detect-rule} R. Tumulka:
	Distribution of the Time at Which an Ideal Detector Clicks.
	To appear in {\it Annals of Physics} (2022)
	\url{http://arxiv.org/abs/1601.03715}
	
\bibitem{detect-several} R. Tumulka:
	Detection Time Distribution for Several Quantum Particles.
	\url{http://arxiv.org/abs/1601.03871}

\bibitem{detect-dirac} R. Tumulka:
	Detection Time Distribution for the Dirac Equation.
	\url{http://arxiv.org/abs/1601.04571}

\bibitem{detect-imaginary} R. Tumulka:
	Absorbing Boundary Condition as Limiting Case of Imaginary Potentials.
	\url{http://arxiv.org/abs/1911.12730}
	
\bibitem{detect-thm} S. Teufel and R. Tumulka:
	Existence of Schr\"odinger Evolution with Absorbing Boundary Condition.
	To appear in {\it Quantum Studies: Mathematics and Foundations} (2022)
	\url{http://arxiv.org/abs/1912.12057}

\bibitem{All69b} G.R. Allcock:
	The time of arrival in quantum mechanics II. The individual measurement.
	{\it Annals of Physics} {\bf 53}: 286--310 (1969)

\bibitem{Nai40} M.A. Naimark:
	Spectral functions of a symmetric operator.
	{\it Izvestiya Akademii Nauk SSSR. Seriya Matematicheskaya}
	{\bf 4(3)}: 277--318 (1940)

\bibitem{Lud85} G. Ludwig:
	\textit{Foundations of quantum mechanics, Vol.~I and II.}
	Berlin: Springer-Verlag (1983, 1985)

\bibitem{povm} R. Tumulka:
	POVM.
	Pages 479--483 in D. Greenberger, K. Hentschel, and F. Weinert (editors):
	\textit{Compendium of Quantum Physics,}
	Berlin: Springer-Verlag (2009)	

\bibitem{Hall} B.C. Hall:
	{\it Quantum Theory for Mathematicians.}
	New York: Springer (2013)

\end{thebibliography}
\end{document}